\newtheorem{theorem}{Theorem}
\newtheorem{protocol}[theorem]{Protocol}
\newtheorem{lemma}[theorem]{Lemma}
\theoremstyle{definition}
\newtheorem{definition}[theorem]{Definition}
\newtheorem{remark}[theorem]{Remark}
\newcommand{\qedsymb}{\hfill{\rule{2mm}{2mm}}}
\renewenvironment{proof}[1][]{\begin{trivlist} 
\item[\hspace{\labelsep}{\bf\noindent Proof#1:\/}] }{\qedsymb\end{trivlist}}
\newcommand{\be}{\begin{equation}}
\newcommand{\ee}{\end{equation}}
\newcommand{\ba}{\begin{array}}
\newcommand{\ea}{\end{array}}
\newcommand{\bc}{\begin{cases}}
\newcommand{\ec}{\end{cases}}
\newcommand{\beba}{\begin{equation}\begin{array}{lll}}
\newcommand{\eeea}{\end{array}\end{equation}}
\newcommand{\ssll}[2]{\sum\limits_{#1}^{#2}}
\newcommand{\ppll}[2]{\prod\limits_{#1}^{#2}}
\newcommand{\zp}{\mathbb{Z}_p}
\newcommand{\zpn}{\mathbb{Z}^n_p}
\newcommand{\zpm}{\mathbb{Z}^m_p}
\newcommand{\zpN}{\mathbb{Z}^N_p}
\renewcommand{\log}{\mathrm{log}_2}
\newcommand{\tpi}[1]{\Theta^{-1}_{p}(#1)}
\newcommand{\tp}[1]{\Theta_{p}(#1)}
\newcommand{\vx}{\vec{x}}
\newcommand{\vy}{\vec{y}}
\newcommand{\vz}{\vec{z}}
\newcommand{\too}{\leftrightarrow}
\begin{document}

\title{Functional boxes, communication complexity and information causality}

\author{Guoming Wang}
\email{gmwang@eecs.berkeley.edu}
\affiliation{Computer Science Division, University of California Berkeley, Berkeley, California 94720, USA}

\date{\today}

\begin{abstract}
We propose a class of nonlocal boxes named functional boxes which include a generalization of the Popescu-Rohrlich(PR) box as a special case. We show that every functional box corresponding to an additively inseparable function can make communication complexity trivial and thus seems unlikely to exist in nature. Then we relate general nonlocal boxes to these functional boxes, and derive several limits on them from the principle of information causality.
\end{abstract}

\pacs{03.65.Ud}

\maketitle

\section{Introduction}

Quantum mechanics allows distant parties to establish certain correlations that are impossible in the classical world. These correlations are nonlocal, in the sense that they violate Bell inequality \cite{Bel64,CHSH69}. However, as first shown by Tsirelson\cite{Tsi80}, their violation is still bounded. In a seminal paper, Popescu and Rohrlich \cite{PR94} proposed a hypothetical nonlocal correlation that attains the maximal value for the Clauser-Horne-Shirmony-Holt(CHSH) inequality\cite{CHSH69} but still cannot be used to signal from one party to another. Understanding why such stronger-than-quantum correlations have not been observed in nature has since then become an active topic of research \cite{Dam05,WW05,BP05,BLM+05,BCU+06,BBL+06,MAG06,SGB+06,BB+07,LPS+07,Bar07,MR08,ABL+09,PPK+09,BS09,SBP09,ABP+09,NW09,Hsu09,GMC+10,
OW10,ABB+10,CSS10,XR11,Hsu11,GWA+11,SS11}. Recently, several information-theoretical principles were proposed as candidates that separate physically realizable correlations from nonphysical ones. In this paper, we will be concerned with two of them --- nontrivial communication complexity \cite{Dam05} and information causality \cite{PPK+09}. Both of them show that allowing certain postquantum correlations would lead to implausible simplification of communication tasks.

In one line, van Dam\cite{Dam05} showed that, equipped with many copies of Popescu-Rohrlich(PR) boxes\cite{PR94}, every communication complexity problem can be solved deterministically by a single bit of communication. Then, Brassard \emph{et al.}\cite{BBL+06} extended this result by proving that if PR boxes can be implemented with probability greater than $(3+\sqrt{6})/6 \approx 90.8\%$, then the probabilistic communication complexity also collapses. These results were further generalized by Refs.\cite{BP05,MR08,BS09}.

In another line, Paw{\l}owski \emph{et al.} \cite{PPK+09} considered a scenario in which Alice had a database of $N$ independent and random bits, and a distant party, Bob, was asked to guess a the $k$-th bit in Alice's database for random $k \in \{0,1,\dots,N-1\}$. They suggested a principle stating that Bob can gain at most $m$ bits of information about Alice's database by using his local resources and receiving $m$ bits from Alice. This principle was named information causality. When $m=0$, it reduces to no-signalling. It was demonstrated that using many copies of PR boxes Bob can correctly guess any bit of Alice with certainty by receiving only $1$ bit from her. Moreover, any correlation exceeding Tsirelson's bound for the CHSH inequality violates information causality. The implications of information causality on nonlocality were further explored in Refs.\cite{ABP+09,Hsu09,CSS10,XR11,Hsu11,GWA+11,SS11}.

A common feature of these two proposals is that the PR box perfectly cracks the information processing task under consideration. It also exhibits an extremely strong power for other tasks \cite{WW05,LPS+07}. The reason for the success of this box can be summarized as follows. Essentially, the aforementioned tasks can be viewed as that Alice and Bob want to compute some function $f(x,y)$ in some distributed way, where $x$ and $y$ are initially held by Alice and Bob respectively. Note that any boolean (or arithmetic) function can be computed by a circuit consisting of XOR (or addition) and AND (or multiplication) gates. These two types of gates generally do not commute. However, using a PR box, AND operations are ``transformed" into XOR operations,
in the way that the AND of two inputs is encoded as the XOR of two outputs. So Alice and Bob effectively convert the original circuit into a ``circuit" consisting of only XOR operations, which are commutative. Then Alice and Bob can reorder their operations, compact them, and minimize the interaction between them. This argument was made explicit in the discussion of communication complexity\cite{Dam05}. But it can also explain the success of PR box for the information causality task, since that task can be viewed as the communication complexity problem for a special function -- the index function $f(\vx,y)\equiv x_y$, with the extra condition that the communication is only from Alice to Bob.

In light of the above observation, we introduce a class of nonlocal boxes called \textit{functional boxes}. This class of boxes include a generalization of PR box as a special case. A functional box's inputs and outputs
have $p$ possible values which are identified with the elements of $\zp$, where $p$ can be any prime number. This box encodes the value of a function $f:\zp \times \zp \to \zp$ over its inputs as the difference of its outputs. In particular, the PR box is just the functional box corresponding to $f(x,y)=xy$ and $p=2$. We find that, as long as $f$ is not \textit{additively separable} (see section 2 for definition), the corresponding functional box is asymptotically equivalent to a generalized PR box. (More specifically, an additively inseparable function contains a component that is about the multiplication of its two variables, and we show that this component can be isolated out by using a difference method, and hence the corresponding functional box can be transformed into a generalized PR box.) We also prove that a generalized PR box can enable perfect distributed computation and make communication complexity trivial. As a result, any functional box equivalent to it has the same power and is unlikely to exist in nature. Then the next question is how well these functional boxes might be approximated. We derive several bounds on the proximity between a plausible $p$-nary-input, $p$-nary-output box and these functional boxes, from the principle of information causality. In order to do this, we extend the basic and nested protocols in Ref.\cite{PPK+09} to the $p$-nary digit case. Besides, we also present a generalization of the \textit{depolarization} process that transforms any binary-input binary-output box into an \textit{isotropic} one\cite{MAG06,Sho08}, which might be of independent interest.


\section{Functional Boxes and Communication Complexity}

Let us first review several basic definitions about nonlocal boxes.

\begin{definition}
A bipartite correlation box (or simply a box) is a hypothetical device shared by two spatial separated parties Alice and Bob that receives an input $x \in \mathcal{X}$ from Alice and an input $y \in \mathcal{Y}$ from Bob, and outputs $a \in \mathcal{A}$ to Alice and $b \in \mathcal{B}$ to Bob, according to a joint probability distribution $P(a,b|x,y)$. Without causing ambiguity, we also call this box $P$.

If $P$ satisfies
\beba
\ssll{b}{}P(a,b|x,y)&=\ssll{b}{}P(a,b|x,y')&\equiv P(a|x),~~\forall a,x,y,y';\\
\ssll{a}{}P(a,b|x,y)&=\ssll{a}{}P(a,b|x',y)&\equiv P(b|y),~~\forall b,x,x',y,
\eeea
then it is no-signalling. Namely, Alice cannot signal to Bob via her choice of input to this box and vice versa.

If Alice and Bob can simulate $P$ using shared randomness (without communication between them), then $P$ is local. Otherwise, it is nonlocal.
\end{definition}

For example, the standard PR box is given by
\beba
PR(a,b|x,y)=
\bc
\dfrac{1}{2},~~~~\textrm{if}~~a \oplus b=x \wedge y\\
0,~~~~\textrm{otherwise}
\ec,
\eeea
where $a,b,x,y\in \{0,1\}$. Note that $PR(a|x,y)=PR(b|x,y)=\dfrac{1}{2}$,
$\forall a,b,x,y$. So this box is no-signalling.

In this paper, we focus on no-signalling boxes for which $|\mathcal{X}|=|\mathcal{Y}|=|\mathcal{A}|=|\mathcal{B}|=p$,
where $p$ can be any prime number. Without loss of generality, we assume $\mathcal{X}=\mathcal{Y}=\mathcal{A}=\mathcal{B}
=\{0,1,\dots,p-1\} \equiv \zp$, and \textit{all the following computation related to} $x,y,a,b$ \textit{is modulo} $p$.

\begin{definition}
A function $F:\zpn \times \zpm \to \zp$ is distributedly computed by Alice and Bob if, when Alice is given any $\vx=(x_0,x_1,\dots,x_{n-1}) \in \zpn$ and Bob is given any $\vy=(y_0,y_1,\dots,y_{m-1}) \in \zpm$, Alice can produce $a \in \zp$ and Bob can produce $b \in \zp$ such that $a-b=F(\vx,\vy)$.
\end{definition}

If $F$ can be distributedly computed with certainty (or with constant probability), then the deterministic (or probabilistic) communication complexity of $F$ becomes trivial, since Alice can simply send $a$ to Bob and then Bob can calculate $a-b=F(\vx,\vy)$.

Let us first consider the following box, which is a straightforward generalization of standard PR box to the $p$-nary input/output case:
\begin{definition}
\beba
PR_p(a,b|x,y)=
\bc
\dfrac{1}{p},~~~~\textrm{if}~~a-b=xy\\
0,~~~~\textrm{otherwise}
\ec,
\eeea
where $a,b,x,y \in \zp$.
\end{definition}
In particular, the standard PR box is just $PR_2$. Note that $PR_p(a|x,y)=PR_p(b|x,y)=\dfrac{1}{p}$, $\forall a,b,x,y$. So $PR_p$ is no-signalling.

Given arbitrarily many copies of $PR_p$, Alice and Bob will be able to distributedly compute any function $F:\zpn \times \zpm \to \zp$ perfectly. To prove this, fist note that $F(\vx,\vy)$ can always be written as a multivariate polynomial whose degree in each $x_i$ or $y_j$ is no larger than $p-1$:
\beba
F(\vx,\vy)
&=& \ssll{\alpha_0,\dots,\alpha_{n-1}=0}{p-1}
\ssll{\beta_0,\dots,\beta_{m-1}=0}{p-1}
\mu_{\vec{\alpha},\vec{\beta}} \prod\limits_{i=0}^{n-1}x_i^{\alpha_i}
\prod\limits_{j=0}^{m-1} y_j^{\beta_j}
\eeea
for some $\mu_{\vec{\alpha},\vec{\beta}} \in \zp$,
where $\vec\alpha=(\alpha_0,\alpha_1,\dots,\alpha_{n-1})$
and $\vec\beta=(\beta_0,\beta_1,\dots,\beta_{m-1})$. So Alice and Bob can execute the following protocol: for each $(\vec\alpha, \vec\beta)$, they use a $PR_{p}$ as follows: Alice inputs $\prod\limits_{i=0}^{n-1}x_i^{\alpha_i}$ and Bob inputs $\prod\limits_{j=0}^{m-1} y_j^{\beta_j}$, and suppose they
get outputs $a_{\vec{\alpha},\vec{\beta}}$ and
$b_{\vec{\alpha},\vec{\beta}}$ respectively. In the end, Alice sets
$a=\ssll{\vec{\alpha}}{}\ssll{\vec{\beta}}{}\mu_{\vec{\alpha},\vec{\beta}}a_{\vec{\alpha},\vec{\beta}} $ as her final output, and Bob sets
$b=\ssll{\vec{\alpha}}{}\ssll{\vec{\beta}}{}\mu_{\vec{\alpha},\vec{\beta}}b_{\vec{\alpha},\vec{\beta}} $
as his final output. It is easy to verify $a-b=F(x,y)$.

Now let us consider a wider class of boxes:
\begin{definition}
For any function $f:\zp \times \zp \to \zp$, the functional box corresponding to $f$ is defined as
\be
P^f(a,b|x,y)=
\bc
\dfrac{1}{p},~~~~\textrm{if}~~a-b=f(x,y)\\
0,~~~~\textrm{otherwise}
\ec,
\ee
where $a,b,x,y \in \zp$.
\end{definition}
Namely, $P^f$ can be directly used to distributedly compute $f$. In particular, $PR_p$ can be viewed as the functional box corresponding to $f(x,y)=xy$. Note that $P^f(a|x,y)=P^f(b|x,y)=\dfrac{1}{p}$, $\forall a,b,x,y$. So $P^f$ is also no-signalling.

\begin{definition}
A bivariate function $f:\zp \times \zp \to \zp$ is additively separable if there exist univariate functions $g,h: \zp \to \zp$ such that $f(x,y)=g(x)+h(y)$, $\forall x,y \in \zp$. Otherwise, $f$ is additively inseparable.
\end{definition}

\begin{definition}
Suppose $f:\zp \times \zp \to \zp$ can be written as
$f(x,y)=\ssll{i,j=0}{p-1} \lambda_{i,j}x^iy^j$ for some $\lambda_{i,j} \in \zp$. Define
\beba
\Delta(f)=\max\limits_{1 \le i,j \le p-1}\{\mathbb{I}_{\lambda_{i,j} \neq 0}(i+j)\},
\label{eq:deltaf}
\eeea
where $\mathbb{I}$ is the indicator function (i.e. $\mathbb{I}_E=1$ if $E$ is true, and $0$ otherwise). Namely, $\Delta(f)$ is the maximum of the degrees of the terms in $f$ that can be divided by $xy$.
\end{definition}

Obviously, a function $f$ is additively inseparable if and only if $\Delta(f) \ge 2$. Only an additively inseparable $f$ contains a term that is related to the product of $x$ and $y$. So one may naturally wonder if $P^f$ can be used to simulate $PR_p$. If so, then $P^f$ can also benefit distributed computation. We find that it is indeed the case, provided we are given sufficiently many copies of $P^f$.

\begin{definition}
We use the notation $P_1 \to P_2$ to denote the fact that we can use $N$ copies of $P_1$ to simulate a $P_2$ exactly, for some $N \ge 1$. We also use $P_1 \too P_2$ to denote that $P_1 \to P_2$ and $P_2 \to P_1$, i.e. $P_1$ and $P_2$ are asymptotically interconvertible.
\end{definition}

\begin{lemma}
If $f:\zp \times \zp \to \zp$ is additively separable, then
$P^f$ is local. Otherwise, $P^f \too PR_p$.
\label{lem:pf}
\end{lemma}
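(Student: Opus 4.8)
The statement has two parts, and the first is routine. If $f(x,y)=g(x)+h(y)$ is separable, I would exhibit a local (shared-randomness) simulation of $P^f$: Alice and Bob share a uniformly random $r \in \zp$, and on inputs $x,y$ Alice outputs $a=g(x)+r$ while Bob outputs $b=r-h(y)$. Then $a-b=g(x)+h(y)=f(x,y)$ always, and both marginals are uniform, so the induced distribution is exactly $P^f$; hence $P^f$ is local. For the inseparable case I must show $P^f \too PR_p$. One direction, $PR_p \to P^f$, is immediate from the distributed-computation protocol already given before the lemma: with copies of $PR_p$ Alice and Bob can produce $a-b=f(x,y)$ for any $f$, and since the $PR_p$ outputs are uniform and independent the resulting marginals stay uniform, reproducing $P^f$ exactly. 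The whole content is therefore the reduction $P^f \to PR_p$.

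My plan for $P^f\to PR_p$ is to assemble a small collection of operations that preserve the relation ``$P^f$ can simulate $P^{g}$'' and to use them to distil the pure product $xy$ out of $f$. Three operations suffice. First, linear combination: running several copies on the same inputs and letting Alice and Bob each form the same $\zp$-linear combination of their outputs realizes $P^{\sum_k c_k g_k}$ from boxes realizing the $g_k$. Second, input substitution: if Alice pre-applies $\phi$ and Bob pre-applies $\psi$, a box for $g$ becomes a box for $g(\phi(x),\psi(y))$; I will only need the affine substitutions $x\mapsto x+s$ and $x\mapsto cx$. Third, adding a separable function: Alice may subtract any $\alpha(x)$ from her output and Bob any $\beta(y)$ from his, shifting $g$ by $\alpha(x)-\beta(y)$. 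Using the second operation on the four input pairs $(x,y),(x,0),(0,y),(0,0)$ with signs $+,-,-,+$, I first replace $f$ by its mixed part $\tilde f(x,y)=f(x,y)-f(x,0)-f(0,y)+f(0,0)=\ssll{i,j=1}{p-1}\lambda_{i,j}x^iy^j$, which is nonzero precisely because $\Delta(f)\ge 2$.

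From $\tilde f$ I would isolate $xy$ in three moves. (a) Create a nonzero coefficient of $x^1y^1$: the $xy$-coefficient of $\tilde f(x+s,y+t)$ equals $Q(s,t)=\ssll{i,j=1}{p-1}\lambda_{i,j}\,ij\,s^{i-1}t^{j-1}$. Since $ij\not\equiv 0 \pmod p$ for $1\le i,j\le p-1$, $Q$ is a nonzero polynomial of degree at most $p-2$ in each variable, hence a nonzero function on $\zp\times\zp$; I fix $s_0,t_0$ with $Q(s_0,t_0)\neq 0$ and set $F(x,y)=\tilde f(x+s_0,y+t_0)$. (b) Extract the degree-one part in $x$: with the scalings $x\mapsto cx$ for $c\in\zpstar$, weights $-c^{-1}$, and the power-sum identity $\ssll{c\in\zpstar}{}c^{k}=-1$ when $(p-1)\mid k$ and $0$ otherwise, the combination $\ssll{c\in\zpstar}{}(-c^{-1})\,F(cx,y)$ collapses to $x\,q(y)$, where $q$ has $y^1$-coefficient $Q(s_0,t_0)$, killing every power $x^i$ with $i\neq 1$ (in particular the separable $i=0$ terms of $F$). (c) Applying the identical extraction in $y$ to $x\,q(y)$ reduces it to $\lambda\,xy$ with $\lambda=Q(s_0,t_0)\neq 0$; a final scaling of both outputs by $\lambda^{-1}$ yields $xy$, i.e.\ $PR_p$. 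Concatenating these reductions gives $P^f\to PR_p$ with a bounded number of copies.

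The main obstacle is step (a): a naive attempt to read off the $x^1y^1$ coefficient fails for functions such as $x^2y^2$, where $\lambda_{1,1}=0$, so the shift is essential, and its success hinges on the nonvanishing of a low-degree polynomial over $\zp$, together with the power-sum identities that drive (b) and (c). The remaining care is bookkeeping: verifying that at each stage the constructed $a$ is a nontrivial $\zp$-linear combination of independent uniform box outputs, so that marginals stay uniform and every simulation is exact rather than merely correct on the difference $a-b$; and noting that for $p=2$ inseparability forces $\tilde f=xy$ outright, so the construction degenerates but still applies.
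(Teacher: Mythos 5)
Your proposal is correct, and for the only substantial part --- the reduction $P^f \to PR_p$ --- it takes a genuinely different route from the paper's. (Your separable case and the direction $PR_p \to P^f$ coincide with the paper's, except that the paper re-randomizes both outputs by a shared uniform shift, while you argue directly that a nontrivial linear combination of independent uniform box outputs is uniform; both are fine.) The paper proves $P^f \to PR_p$ by induction on $\Delta(f)$: an inseparable $f$ with $\Delta(f)\ge 3$ has a mixed term divisible by $x^2y$ or $xy^2$, and the one-step finite difference $f(x+1,y)-f(x,y)$ (or $f(x,y+1)-f(x,y)$), implementable with two copies of $P^f$ run on shifted inputs, lowers $\Delta$ by one while keeping it at least $2$; iterating reaches the base case $f=\lambda xy+g(x)+h(y)$, which is handled exactly like your final rescaling, for a total of $2^{\Delta(f)-2}$ copies. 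You instead give a one-shot distillation: a four-point inclusion--exclusion isolating the mixed part $\tilde f$, a translation $(s_0,t_0)$ forcing a nonzero $xy$-coefficient (justified by the nonvanishing, as a function, of a nonzero polynomial of individual degree at most $p-2$ over $\zp$), and two averagings over $\zpstar$ with weights $-c^{-1}$, driven by power-sum identities, projecting onto the $x^1$- and then the $y^1$-component. Your treatment of the two delicate points --- uniformity of marginals at every stage, and the degenerate case $p=2$, where the $\zpstar$-averaging does nothing but inseparability already forces $\tilde f=xy$ --- is adequate. The trade-off is worth noting: the paper's recursion is more elementary and uses $2^{\Delta(f)-2}$ copies, which is very efficient when $\Delta(f)$ is small (a single copy when $\Delta(f)=2$), whereas your construction uses roughly $4(p-1)^2$ copies independently of $\Delta(f)$, exponentially better when $\Delta(f)$ is large (it can reach $2(p-1)$). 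This bears directly on the question raised in the paper's conclusion of simulating $PR_p$ with fewer copies of $P^f$, and for large $\Delta(f)$ it would tighten the parameter $M=2^{\Delta(f)-2}$ entering Theorem \ref{thm:icpf2}; note, however, that your weights $-c^{-1}$ also multiply the accumulated noise shifts, so the set $\mathcal{S}_{M,j}$ and hence the exact form of that bound would have to be reworked.
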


\begin{proof}
(1)Suppose $f(x,y)=g(x)+h(y)$ for some $g,h:\zp \to \zp$,
then we can build a local model for $P^f$ as follows:
Alice and Bob first generate a uniformly random variable $z \in \zp$, then Alice outputs $a=g(x)+z$ and Bob outputs $b=-h(y)+z$.

(2)Suppose $f$ is additively inseparable. We first show $PR_p \to P^f$. Recall that we have already shown how to use many copies of $PR_p$ to perform distributed computation of any functions, including $f$. So Alice and Bob can first use that protocol to obtain $a$ and $b$ such that $a-b=f(x,y)$. Then they generate a uniformly random $z \in \zp$ and modify their outputs by $a \to a+z$ and $b \to b+z$.

It remains to show $P^f \to PR_p$. Proof by induction on $\Delta(f)$:

\begin{itemize}
\item Base case: $\Delta(f)=2$. In this case, we have
\beba
f(x,y)=\lambda xy+g(x)+h(y)
\eeea
for some $\lambda \neq 0$ and $g,h:\zp \to \zp$.
We can use a $P^f$ to simulate a $PR_p$ as follows:
Alice inputs $x$ and Bob inputs $y$ to $P^f$, and suppose they obtain outputs $a$ and $b$ respectively. Then Alice sets 
\beba
a'=\lambda^{-1}(a-g(x))
\eeea
as her final output, while Bob sets 
\beba
b'=\lambda^{-1}(b+h(y))
\eeea
as his final output. Then we have
\beba
a'-b'&=&\lambda^{-1}(a-b-g(x)-h(y))\\
&=&\lambda^{-1}(f(x,y)-g(x)-h(y))\\
&=&xy.
\eeea
Furthermore, since $a$ (or $b$) is uniformly random, $a'$ (or $b'$) is also uniformly random conditioned on any $(x,y)$.

\item Inductive step: Suppose $P^f \to PR_p$ for any $f$ with $\Delta(f)=k$ for some $k \ge 2$.
Consider any $f$ with $\Delta(f)=k+1 \ge 3$. Such $f(x,y)$ (viewed as a bivariate polynomial) contains a term that is a multiple of $x^2y$ or $xy^2$. We deal with the two cases separately.

In the first case, consider
\beba
f^{(1)}(x,y) \equiv f(x+1,y)-f(x,y).
\eeea
Compared to $f$, the degree of $f^{(1)}$ in $x$ is decreased by $1$, while its degree in $y$ is the same or smaller.
Moreover, it is easy to see
\beba
\Delta(f^{(1)})=\Delta(f)-1=k \ge 2.
\eeea
Thus by induction $P^{f^{(1)}} \to PR_p$.
Furthermore, $P^{f^{(1)}}$ can be simulated with two copies of $P^f$ as follows: for the first $P^f$, Alice inputs $x+1$ and Bob inputs $y$, and assume they receive outputs $a_1$ and $b_1$ ; for the second $P^f$, Alice inputs $x$ and Bob inputs $y$, and assume
they receive outputs $a_2$ and $b_2$. Then Alice sets $a=a_1-a_2$
as her final output, and Bob sets $b=b_1-b_2$
as his final output. Then we have
\beba
a-b&=&(a_1-a_2)-(b_1-b_2)\\
&=&(a_1-b_1)-(a_2-b_2)\\
&=&f(x+1,y)-f(x,y)\\
&=&f^{(1)}(x,y).
\eeea
So $P^f \to P^{f^{(1)}} \to PR_p$.

A similar argument holds for the second case. But in this case, we consider
\beba
f^{(2)}(x,y) \equiv f(x,y+1)-f(x,y),
\eeea
which satisfies
\beba
\Delta(f^{(2)})=\Delta(f)-1=k \ge 2.
\eeea
Then by induction $P^{f^{(2)}} \to PR_p$.
Furthermore, $P^{f^{(2)}}$ can also be simulated with two copies of $P^f$. Therefore we have $P^f \to P^{f^{(2)}} \to PR_p$.

\begin{remark}
This proof actually yields a recursive strategy to simulate $PR_p$ with $2^{\Delta(f)-2}$ copies of $P^f$. To be specific, we have shown
\beba
P^f \to P^{f_1} \to P^{f_2} \to \dots \to P^{f_L} \to PR_p,
\eeea
where $f_{i}(x,y)=f_{i-1}(x+1,y)-f_{i-1}(x,y)$ or $f_{i}(x,y)=f_{i-1}(x,y+1)-f_{i-1}(x,y)$. The strategy is to use a $P^{f_L}$ to simulate a $PR_p$, where this $P^{f_{L}}$ is simulated with two copies of $P^{f_{L-1}}$, where each copy of $P^{f_{L-1}}$
is simulated with two copies of $P^{f_{L-2}}$,
and so on. Overall, $2^{\Delta(f)-2}$ copies of $P^f$ is used, since $\Delta(f_{i-1})-\Delta(f_{i})=1$ and $\Delta(f_L)=2$.

This strategy can be demonstrated by the following example. Suppose $p=3$ and $f:\mathbb{Z}_3 \times \mathbb{Z}_3 \to \mathbb{Z}_3$
is given by
\beba
f(x,y)=x^2y^2+2xy^2+xy+2x.
\eeea
Using two copies of $P^f$, we can simulate
a $P^{f_1}$ where
\beba
f_1(x,y) & = & f(x+1,y)-f(x,y)\\
&=&2xy^2+y+2.
\eeea
Then using two copies of $P^{f_1}$ (each of
which is simulated with two copies of $P^f$), we can simulate a $P^{f_2}$ where
\beba
f_2(x,y) & = & f_1(x,y+1)-f_1(x,y)\\
&=&xy+2x+1
\eeea
Finally, we can simulate a $PR_p$ with a $P^{f_2}$ by Alice subtracting her output by $2x+1$. Overall, four copies of $P^f$ is used to to simulate a $PR_p$.
\end{remark}
\end{itemize}
\end{proof}

Hence there are only two inequivalent classes of functional boxes with respect to asymptotic transformation: those corresponding to additively separable functions are local and cannot benefit distributed computation, while the others are all equivalent to $PR_p$ and can enable perfect distributed computation and make communication complexity trivial. Therefore, we have

\begin{theorem}
In any world where communication complexity is not trivial,
the functional box $P^f$ corresponding to any additively inseparable function $f:\zp \times \zp \to \zp$ cannot be implemented perfectly.
\end{theorem}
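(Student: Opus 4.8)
The plan is to argue by contradiction, simply chaining together the interconvertibility result of Lemma~\ref{lem:pf} with the earlier observation that arbitrarily many copies of $PR_p$ trivialize communication complexity. Suppose, toward a contradiction, that $f:\zp \times \zp \to \zp$ is additively inseparable and yet $P^f$ can be implemented perfectly. I would first invoke Lemma~\ref{lem:pf}: since $f$ is additively inseparable we have $P^f \too PR_p$, and in particular $P^f \to PR_p$, meaning there is a fixed finite $N$ such that $N$ copies of $P^f$ suffice to simulate a single $PR_p$ exactly.

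The next step is to observe that a perfect implementation of $P^f$ furnishes arbitrarily many independent copies of $P^f$ on demand. Grouping these into blocks of $N$ and applying the simulation $P^f \to PR_p$ to each block, Alice and Bob can therefore produce arbitrarily many perfect copies of $PR_p$. I would then recall the protocol given earlier in this section: writing any target function $F:\zpn \times \zpm \to \zp$ as a multivariate polynomial of degree at most $p-1$ in each variable, Alice and Bob consume one $PR_p$ per monomial and combine the outputs linearly to obtain $a$ and $b$ with $a-b=F(\vx,\vy)$. Since only finitely many $PR_p$ boxes are needed for any fixed $F$, and all of them are now available, every $F$ can be distributedly computed with certainty.

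Finally, I would close the loop with the observation noted just after the definition of distributed computation: once $a-b=F(\vx,\vy)$ is achieved, Alice sends the single symbol $a \in \zp$ to Bob, who computes $F(\vx,\vy)=a-b$ locally. Thus every communication task collapses to a constant amount of communication, i.e. communication complexity becomes trivial, contradicting the hypothesis of the theorem. This establishes that $P^f$ cannot be implemented perfectly.

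As for difficulties, the argument is essentially a composition of results already in hand, so the only point demanding care is the bookkeeping of resources: I must ensure the reduction $P^f \to PR_p$ uses a number of copies $N$ that is independent of the later computation, and that the distributed-computation protocol consumes only finitely many $PR_p$ boxes for each $F$. Both hold because $N$ depends only on $\Delta(f)$ (indeed $N=2^{\Delta(f)-2}$ by the preceding Remark) and the number of monomials of $F$ is finite. Hence no genuine obstacle arises beyond confirming that \emph{perfect implementation} legitimately supplies the unboundedly many boxes that the chain of simulations requires.
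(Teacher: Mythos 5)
Your proposal is correct and follows essentially the same route as the paper: the paper also obtains the theorem by chaining Lemma~\ref{lem:pf} ($P^f \too PR_p$ for additively inseparable $f$) with the earlier protocol that uses one $PR_p$ per monomial to distributedly compute any $F:\zpn \times \zpm \to \zp$, after which a single transmitted symbol $a$ lets Bob compute $F(\vx,\vy)=a-b$, trivializing communication complexity. Your only addition is making the resource bookkeeping explicit (the fixed $N=2^{\Delta(f)-2}$ and the finite number of monomials), which the paper leaves implicit.
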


\section{Limits on Nonlocality from Information Causality}

In the previous section, we have shown that an exact implementation of $P^f$ for any additively inseparable function $f$ is impossible, unless communication complexity collapses. So now the question is how well these boxes might be implemented in nature. In this section, we partially answer this question by deriving several bounds on the proximity between any plausible $p$-nary-input, $p$-nary-output box and these functional boxes, from the principle of information causality. But before doing that, we need to first present the following result which is a key ingredient to our analysis.

\subsection{Generalized Depolarization Process}

It is well known that any binary-input, binary-output box can be transformed into an \textit{isotropic} one:
\beba
P_{iso}(\lambda) & \equiv & \lambda PR_2 + (1-\lambda) P_{N}
\\&=& \dfrac{1+\lambda}{2}PR_2 + \dfrac{1-\lambda}{2}\overline{PR_2},
\eeea
where $P_N$ is the completely random noise
(i.e. $P_N(a,b|x,y)=\dfrac{1}{4}$, $\forall a,b,x,y \in \mathbb{Z}_2$), and $\overline{PR_2}$ is the anti-PR box
(i.e. $\overline{PR_2}(a,b|x,y)=\dfrac{1}{2}$ if
$a+b=xy+1$, and $0$ otherwise,
$\forall a,b,x,y \in \mathbb{Z}_2$), via the so-called \textit{depolarization} process\cite{MAG06,Sho08}:
Alice and Bob generate three independent and uniformly random bits $\alpha$, $\beta$ and $\gamma$, and modify their inputs and outputs by 
\beba
x &\to x+\alpha, \\
y &\to y+\beta, \\
a &\to a+\beta x+ \alpha \beta+\gamma,\\
b &\to b+\alpha y+\gamma.
\eeea
Here we prove an analogue of this result for $p$-nary-input, $p$-nary-output boxes.

\begin{definition}
\be
PR_{p,j}(a,b|x,y)=
\bc
\dfrac{1}{p},~~~~\textrm{if}~~a-b=xy-j\\
0,~~~~\textrm{otherwise}
\ec,
\ee
where $a,b,x,y,j \in \zp$.
\end{definition}

\begin{lemma}
Given any $p$-nary-input, $p$-nary-output box $P$, define
\beba
\mu_j=\dfrac{1}{p^2}\ssll{x,y=0}{p-1}P(a-b=xy-j|x,y),
\label{eq:muj}
\eeea
where
\beba
P(a-b=xy-j|x,y)=\ssll{k=0}{p-1}P(a=k,b=k-xy+j|x,y).
\eeea
Then
\be
P \to \sum\limits_{j=0}^{p-1}\mu_j PR_{p,j}.
\ee
\label{lem:dep}
\end{lemma}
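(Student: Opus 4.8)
The plan is to construct an explicit depolarization procedure for the $p$-nary case, generalizing the binary protocol quoted just above. The goal is to turn an arbitrary box $P$ into the mixture $\sum_j \mu_j PR_{p,j}$, where $\mu_j$ records how often $P$ happens to satisfy the relation $a-b = xy - j$, averaged uniformly over inputs. The natural strategy is to have Alice and Bob publicly generate shared random field elements and use them to randomize their inputs and outputs in a way that (i) makes the input pair $(x,y)$ fed into $P$ uniform over $\zp \times \zp$ regardless of their actual inputs, and (ii) restores the correct algebraic relation $a-b = xy$ on the outputs up to the offset $j$ that $P$ actually realized. The binary recipe has them add $\alpha$ to $x$, $\beta$ to $y$, and correct the outputs by $\beta x + \alpha\beta$ on Alice's side and $\alpha y$ on Bob's side; I expect the $p$-nary analogue to use the same shifts with $\alpha,\beta,\gamma$ drawn uniformly from $\zp$.

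\medskip

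First I would posit the protocol: sample $\alpha,\beta,\gamma \in \zp$ uniformly and independently, shared between the parties. Given their true inputs $x,y$, Alice feeds $x+\alpha$ and Bob feeds $y+\beta$ into $P$, obtaining raw outputs $a,b$. Alice then outputs
\[
a' = a - \beta x - \alpha\beta + \gamma,
\qquad
b' = b - \alpha y + \gamma,
\]
where I have transcribed the binary corrections into $\zp$ arithmetic (with signs chosen to make the algebra work out). The key computation is to evaluate $a' - b'$ on the event that the raw outputs satisfy $a - b = (x+\alpha)(y+\beta) - j$ for a given $j$. Expanding $(x+\alpha)(y+\beta) = xy + \beta x + \alpha y + \alpha\beta$ and substituting, the shared $\gamma$ cancels and the correction terms are designed precisely so that $a' - b' = xy - j$. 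That is the heart of the matter and I would carry out this expansion carefully to confirm the signs.

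\medskip

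Next I would verify the two statistical claims that make the protocol produce the claimed mixture. Because $\alpha,\beta$ are uniform and independent of $x,y$, the actual inputs $(x+\alpha, y+\beta)$ presented to $P$ are uniformly distributed over $\zp \times \zp$; hence the probability that the raw outputs land in the ``offset-$j$'' relation is exactly the input-averaged quantity $\mu_j = \frac{1}{p^2}\sum_{x,y} P(a-b = xy - j \mid x,y)$ appearing in the statement. Conditioned on that event, the previous paragraph shows the transformed box behaves as $PR_{p,j}$, i.e. $a'-b' = xy - j$ deterministically at the relation level. Finally, the extra shared randomness $\gamma$ added identically to both outputs guarantees that each marginal output $a'$ (and $b'$) is uniform over $\zp$ conditioned on any $(x,y)$, so the simulated box has the flat marginals required of $PR_{p,j}$ rather than leaking information through the output distribution.

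\medskip

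\textbf{The main obstacle} I anticipate is purely bookkeeping rather than conceptual: pinning down the correct signs and the correct correction terms in $\zp$ so that the $xy$ cross-term survives while the spurious $\beta x$, $\alpha y$, and $\alpha\beta$ terms all cancel. Over $\zp$ these cancellations rely on ordinary field arithmetic (no characteristic-$2$ simplifications as in the binary case), so the expansion of $(x+\alpha)(y+\beta)$ must be matched term-by-term against the corrections. Once the defining identity $a'-b' = xy - j$ is checked on each offset event and the uniformity of both the presented inputs and the corrected outputs is established, assembling these into the statement $P \to \sum_{j} \mu_j PR_{p,j}$ is immediate, since the shared randomness $(\alpha,\beta,\gamma)$ is exactly the convex-combination (shared-randomness) ingredient that realizes the mixture.
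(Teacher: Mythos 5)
Your strategy is exactly the paper's proof: share uniform $\alpha,\beta,\gamma\in\zp$, feed $x+\alpha$ and $y+\beta$ into $P$, correct the outputs so that the cross-terms of $(x+\alpha)(y+\beta)$ cancel, and use $\gamma$ to flatten the marginals; the weights $\mu_j$ then appear because the randomized inputs presented to $P$ are uniform over $\zp\times\zp$. Structurally there is nothing missing.

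However, the one explicit formula you wrote down is wrong for every odd prime, and since the whole point of the lemma is $p>2$, this matters. With your corrections $a' = a - \beta x - \alpha\beta + \gamma$ and $b' = b - \alpha y + \gamma$, on the offset-$j$ event $a - b = (x+\alpha)(y+\beta) - j$ one gets
\begin{equation*}
a' - b' \;=\; xy + 2\alpha y - j,
\end{equation*}
not $xy - j$: the $+\alpha y$ arising from expanding $(x+\alpha)(y+\beta)$ and the $+\alpha y$ you create by \emph{subtracting} $\alpha y$ from $b$ add up instead of cancelling. The correct choice (the paper's) is asymmetric: Alice subtracts $\beta x + \alpha\beta$, but Bob \emph{adds} $\alpha y$, i.e. $b' = b + \alpha y + \gamma$. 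The slip is invisible in the binary case because $+\alpha y = -\alpha y$ over $\mathbb{Z}_2$ — precisely the characteristic-$2$ trap you warned yourself about. Note also that the error is not benign: since $\alpha$ is uniform, the spurious term $2\alpha y$ is uniformly random whenever $y\neq 0$, so your protocol as written scrambles the box for those inputs rather than realizing $\sum_j \mu_j PR_{p,j}$. Your own stated verification step (matching the expansion term by term) would catch and fix this, so the gap is local, but as written the central identity of the proof fails.
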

\begin{proof}
Consider the following protocol: Alice and Bob generate three independent and uniformly random variable $\alpha,\beta,\gamma \in \zp$. They input
\beba
x'&=x+\alpha,\\
y'&=y+\beta
\eeea
to box $P$. Suppose they obtain outputs $a'$ and $b'$. Then they set
\beba
a&=a'-\beta x-\alpha\beta+\gamma,\\
b&=b'+\alpha y+\gamma
\eeea 
as their final outputs.

Suppose this protocol realizes a box $\widetilde{P}$.
Then for any given $a,b,x,y$,
\beba
\widetilde{P}(a,b|x,y)=\dfrac{1}{p^3}\ssll{\alpha,\beta,\gamma=0}{p-1}P(a',b'|x',y').
\eeea
Note that
\beba
a'-b'-x'y'&=&(a+\beta x+\alpha\beta-\gamma)-(b-\alpha y-\gamma)\\
&&-(x+\alpha)(y+\beta)\\
&=&a-b-xy.
\eeea
Besides, $(x',y')$ are uniformly random in $\zp \times \zp$; $a'$ (or $b'$) is also uniformly random in $\zp$ conditioned on any $(x',y')$. So, if $a-b=xy-j$, then
\beba
\widetilde{P}(a,b|x,y)&=&\dfrac{1}{p^3}\ssll{x',y'=0}{p-1}P(a'-b'=x'y'-j|x',y')\\
&=&\dfrac{\mu_j}{p},
\eeea
which implies
\beba
\widetilde{P}=\ssll{j=0}{p-1}\mu_jPR_{p,j}.
\eeea
\end{proof}

Hence, any $p$-nary-input, $p$-nary-output box can be transformed into a probabilistic mixture of $PR_{p,j}$'s, each of which is essentially equivalent to $PR_{p}$ up to an additive shift of the outputs.

\begin{remark}
Lemma \ref{lem:dep} can be straightforwardly generalized to the case of any $q$-nary input/output box where $q$ does not have to be prime.
\end{remark}

\subsection{Limits on Nonlocality from Information Causality}

Let us briefly review the principle of information causality\cite{PPK+09}. It was introduced via the following communication task, which is similar to a random access code \cite{AN+02} or oblivious transfer \cite{Rab81,WW05}. Suppose Alice and Bob are two spatially separated parties. Alice receives a string of $N$ random and independent $p$-nary digits $\vx=(x_0,x_1,\dots,x_{N-1}) \in \zpN$. Bob receives a random variable $y \in \{0,1,\dots,N-1\}$ and is asked to give the $y$-th digit of Alice. To achieve this, they may share in advance some no-signalling resources such as shared randomness, entangled states or nonlocal boxes. Besides, Alice is allowed to send at most $m$ $p$-nary digits (or equivalently, $m\log p$ bits) to Bob. Let us denote Bob's output by $b$. The degree of their success is quantified by
\beba
I  \equiv \ssll{i=0}{N-1}I(x_i:b|y=i)
\eeea
where $I(x_i:b|y=i)$ is the Shannon mutual information between $x_i$ and $b$, under the condition that Bob receives $y=i$. Note that if $P(b=x_i|y=i)=p_i$, then by Fano's inequality,
\beba
I \ge N \log p -\ssll{i=0}{N-1} h(p_i)
-\ssll{i=0}{N-1}(1-p_i)\log (p-1)
\label{eq:fano}
\eeea
where $h(x)=-x\log x-(1-x)\log (1-x)$ is the binary entropy function.

The principle of information causality states that for any physically allowed theories, we must have
\beba
I \le m\log p.
\label{eq:ic}
\eeea
Both classical and quantum correlations satisfy this condition. However, it is unknown whether all postquantum correlations violate this condition.

Now suppose Alice and Bob share unlimited number of copies of a $p$-nary-input, $p$-nary-output box $P$, and Alice is allowed to send only one $p$-nary digit to Bob, i.e. $m=1$. We are going to investigate how $P$ can help them in this task, and presents several limits on $P$ from condition (\ref{eq:ic}).

We will consider the cases of $N \le p$ and $N \ge p$ separately.

\subsubsection*{Case 1: $N \le p$}
In this case, Alice receives $\vx=(x_1,x_2,\dots,x_N) \subseteq \zpN$
and Bob receives $y \in \{0,1,\dots,N-1\}\subseteq \zp$.
Bob aims to obtain the value of
\beba
F(\vx,y) \equiv x_y.
\eeea
Note that
\beba
F(\vx,y)=\ssll{i=0}{N-1}[\ppll{0 \le j \neq i \le N-1}{}(i-j)^{-1}(y-j)]x_i
\label{eq:Fvxy1}
\eeea
for any $\vx \in \zpN$ and $y \in \{0,1,\dots,N-1\}\subseteq \zp$.
Moreover, the right-hand side of the above equation has degree $N-1$ in $y$. So $F(\vx,y)$ can be rewritten as
\beba
F(\vx,y)=\ssll{k=0}{N-1} y^k F_k(\vx)
\label{eq:Fvxy2}
\eeea
for some $F_k:\zpN \to \zp$. This fact suggests a protocol that generates each term $y^k F_k(\vx)$ independently and then sums them together. For $PR_p$, this can be achieved by Alice inputting $F_k(\vx)$ and Bob inputting $y^k$, then the difference between their outputs would be $y^k F_k(\vx)$. For a general box $P$, by lemma \ref{lem:dep}, it can be converted into a mixture of $PR_{p,j}$ boxes which can be viewed as an imperfect $PR_p$ box with random additive noise. As long as the random noise is not very bad, we can still pretend it to be a $PR_p$ and achieve a high efficiency. So consider the following protocol:\\\\

\begin{protocol}[htbp]
\caption{~~~basicRAC$(p,N,c,P,\vx,y)$}
\begin{tabular}{p{0.06 \textwidth}p{0.4 \textwidth}}
\textbf{Setup:} & $p$ is prime. $N \le p$. Alice has $\vx \in \zpN$ and Bob has $y \in \{0,1,\dots,N-1\}$. They share at least
$N-1$ copies of a $p$-nary-input, $p$-nary-output box $P$.
They also choose $c \in \zp$.\\
\textbf{Steps:} &
1. Alice and Bob convert each copy of $P$ into a copy of $\widetilde{P}= \ssll{j=0}{p-1}\mu_j PR_{p,j}$ (where $\mu_j$ is given by Eq.(\ref{eq:muj})), using the protocol given in the proof of lemma \ref{lem:dep}.\\
&2. Alice and Bob use $N-1$ copies of box $\widetilde{P}$ as follows: for the $k$-th box, Alice inputs $F_k(\vx)$ (which is given by Eqs.(\ref{eq:Fvxy1}) and (\ref{eq:Fvxy2})) and Bob inputs $y^k$, and suppose they get outputs
$a_k$ and $b_k$ respectively, $\forall k=1,2, \dots,N-1$.\\
&3. Alice sends $q=\ssll{k=1}{N-1} a_k + F_0(\vx)$ to Bob.\\
&4. After receiving $q$, Bob outputs $b=q-\ssll{k=1}{N-1} b_k-c$.\\
\end{tabular}
\end{protocol}

For an illustration of this protocol, see Fig. \ref{fig:basic}.

\begin{figure}[htbp]
\center
\includegraphics[width=0.9 \columnwidth, height=\columnwidth]{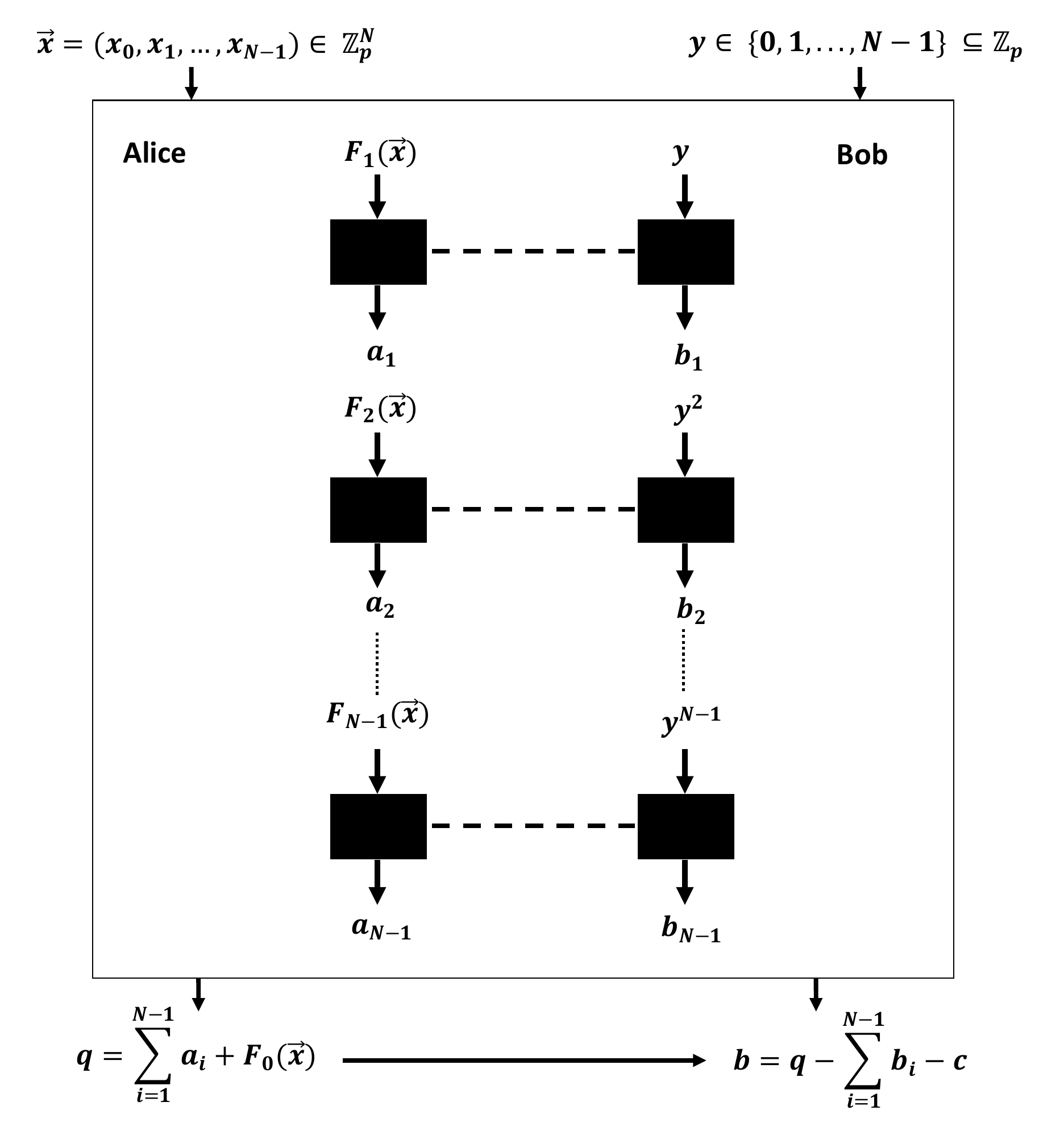}\\
\caption{Basic Protocol for $N\le p$. Each black box represents a copy of $\widetilde{P}$ that is obtained from a copy of $P$ using the protocol given in the proof of lemma \ref{lem:dep}. $F_0(\vx),F_1(\vx),\dots,F_{N-1}(\vx)$ are given by Eqs.(\ref{eq:Fvxy1}) and (\ref{eq:Fvxy2}). }
\label{fig:basic}
\end{figure}

\begin{remark}
When $p=2$ and $N=2$, we have $F(x_0,x_1,y)=x_0+(x_0+x_1)y$ and hence $F_1(x_0,x_1)=x_0+x_1$, $F_0(x_0,x_1)=x_0$. Then the above protocol with $c=0$ reduces to the basic protocol in Ref.\cite{PPK+09}.
\end{remark}

Let us analyse the efficiency of this protocol.

First, consider the special case
of $P=PR_p$ and $c=0$. In this case, step 1 does not have any effect on $PR_p$, and we have $\widetilde{P}=PR_p$. So in step 2, we have
\beba
a_k-b_k=F_k(\vx)y^k, ~~~\forall k.
\eeea
As a result, Bob's output 
\beba
b=q-\ssll{k=1}{N-1} b_k
=\ssll{k=1}{N-1} (a_k-b_k) + F_0(\vx)
=F(\vx,y).
\eeea 
So Bob correctly guesses $x_y$ with certainty.

However, by lemma \ref{lem:dep}, $\widetilde{P}$ is generally not $PR_p$, but equals $PR_{p,j}$ with probability $\mu_j$, $\forall j\in \zp$. Then 
\beba
a_k-b_k=F_k(\vx)y^k-j_k
\eeea
where 
\beba
P(j_k=j)=\mu_j,~~~\forall j\in \zp.
\eeea
And Bob's output is
\beba
b=F(\vx,y)-\ssll{k=1}{N-1}j_k-c,
\eeea
which is correct if and only if 
\beba
\ssll{k=1}{N-1}j_k=-c,
\eeea
which happens with probability
\beba
\chi_p(\vec{\mu},N-1,c) \equiv
\ssll{j_1+\dots+j_{N-1}=-c}{}\ppll{k=1}{N-1}{\mu_{j_k}},
\eeea
where $\vec{\mu}=(\mu_0,\mu_1,\dots,\mu_{p-1})$. Note that this probability is independent of $y$.
So by Eq.(\ref{eq:fano}), we must have
\beba
\chi_p(\vec{\mu},N-1,c) \le \tpi{\dfrac{N-1}{N}\log p},
\eeea
where
\be
\tp{x}=h(x)+(1-x)\log (p-1)
\ee
and $\tpi{\cdot}$ is its inverse function
\footnote{For any $p$, the function $\tp{x}$ first monotonically increases, then monotonically drops, as $x$ grows from $0$ to $1$. Only the first part is interesting to us, and $\tpi{\cdot}$ is the inverse function for that part.}, for condition (\ref{eq:ic}) to be satisfied.

Note that $\forall c \in \zp$, $\forall M \ge 1$,
\beba
\chi_p(\vec{\mu},M,c)
&=&\ssll{j_1+\dots+j_{M}=-c}{}\ppll{k=1}{M}{\mu_{j_k}}\\
&=&\dfrac{1}{p}\textrm{tr}(Z_p^{c}(\ssll{j=0}{p-1}\mu_j Z^j_p)^{M})\\
&=&\dfrac{1}{p}\ssll{k=0}{p-1}\omega_p^{ck}(\ssll{j=0}{p-1}\mu_j\omega_p^{jk})^{M}.
\eeea
where $Z_p=\textrm{diag}(1,\omega_p,\dots,\omega_p^{p-1})$
is the $p$-dimensional generalization of Pauli $Z$ matrix,
$\omega_p=e^{i{2\pi}/{p}}$, and in the second step we use $\textrm{tr}(Z^i_p)=0$, $\forall i=1,2,\dots,p-1$. Therefore,

\begin{theorem}
In any world where information causality holds,
any $p$-nary-input, $p$-nary-output box $P$ satisfies:
$\forall N \le p$, $\forall c \in \zp$,
\beba
\dfrac{1}{p}\ssll{k=0}{p-1}\omega_p^{ck}(\ssll{j=0}{p-1}\mu_j\omega_p^{jk})^{N-1} \le \tpi{\dfrac{N-1}{N}\log p},
\eeea
where $\mu_j$ is defined as Eq.(\ref{eq:muj}).
\label{thm:icpr}
\end{theorem}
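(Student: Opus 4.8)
The plan is to observe that Theorem \ref{thm:icpr} is essentially a bookkeeping corollary of the machinery already assembled in this subsection, so the proof is mostly a matter of chaining together established facts and the closed-form rewriting of $\chi_p$. First I would invoke Lemma \ref{lem:dep} to replace the arbitrary box $P$ by the isotropic mixture $\widetilde{P}=\sum_{j=0}^{p-1}\mu_j PR_{p,j}$, with $\mu_j$ given by Eq.(\ref{eq:muj}); since any $N-1$ copies of $P$ can be converted into $N-1$ copies of $\widetilde{P}$, it suffices to bound what $\widetilde{P}$ can achieve in the random-access task. I would then run \textbf{basicRAC}$(p,N,c,P,\vx,y)$ exactly as described: Alice and Bob feed $F_k(\vx)$ and $y^k$ into the $k$-th box, Alice transmits the single digit $q$, and Bob outputs $b$. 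The analysis preceding the theorem already shows that each box contributes an independent additive error $j_k$ distributed according to $\mu$, so Bob's guess is correct precisely when $\sum_{k=1}^{N-1} j_k=-c$, an event of probability $\chi_p(\vec\mu,N-1,c)$ that is manifestly independent of $y$.

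The next step is to feed this uniform success probability into the information-causality bound. Because $P(b=x_i\mid y=i)=\chi_p(\vec\mu,N-1,c)$ for every $i$, Fano's inequality (\ref{eq:fano}) gives
\begin{equation}
I \ge N\log p - N\,h\bigl(\chi_p(\vec\mu,N-1,c)\bigr) - N\bigl(1-\chi_p(\vec\mu,N-1,c)\bigr)\log(p-1).
\end{equation}
Recognizing the last two terms as $N\,\tp{\chi_p(\vec\mu,N-1,c)}$, and imposing the principle (\ref{eq:ic}) with $m=1$, namely $I\le\log p$, I would rearrange to
\begin{equation}
\tp{\chi_p(\vec\mu,N-1,c)} \ge \frac{N-1}{N}\log p.
\end{equation}
Since $\tp{\cdot}$ is increasing on the relevant branch (as noted in the footnote), applying $\tpi{\cdot}$ yields $\chi_p(\vec\mu,N-1,c)\le\tpi{\frac{N-1}{N}\log p}$.

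Finally I would substitute the character-sum evaluation of $\chi_p$ derived just before the theorem. That computation rewrites the constrained product sum as $\tfrac{1}{p}\operatorname{tr}\bigl(Z_p^{c}(\sum_j\mu_j Z_p^j)^{M}\bigr)$ and diagonalizes, giving $\chi_p(\vec\mu,M,c)=\tfrac{1}{p}\sum_{k=0}^{p-1}\omega_p^{ck}(\sum_{j}\mu_j\omega_p^{jk})^{M}$; taking $M=N-1$ produces exactly the left-hand side of the theorem's inequality. Chaining these gives the claim for every $N\le p$ and every $c\in\zp$.

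The step I expect to require the most care is not any single equation but the justification that the success probability is genuinely $y$-independent and that Fano's inequality may be applied with the \emph{same} $p_i=\chi_p(\vec\mu,N-1,c)$ for all $i$ simultaneously; this is what collapses the sum $\sum_i h(p_i)$ into $N\,h(\chi_p)$ and lets the bound $I\le\log p$ translate cleanly through the monotone function $\tp{\cdot}$. Once the uniformity of the guessing probability across all indices $i$ is firmly in hand, the remaining manipulations are the routine algebra already spelled out in the surrounding text.
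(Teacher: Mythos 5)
Your proposal follows the paper's own proof essentially line for line: depolarize $P$ into $\widetilde{P}=\sum_{j}\mu_j PR_{p,j}$ via Lemma \ref{lem:dep}, run basicRAC so that each box contributes an independent additive shift $j_k$ with law $\vec\mu$, note that the success probability $\chi_p(\vec\mu,N-1,c)$ is independent of $y$, combine Fano's inequality (\ref{eq:fano}) with the information-causality bound (\ref{eq:ic}) for $m=1$, and finish with the character-sum evaluation of $\chi_p$. All of that matches the paper, and your derivation of the intermediate constraint
\begin{equation}
\tp{\chi_p(\vec\mu,N-1,c)}\ \ge\ \frac{N-1}{N}\log p
\end{equation}
is correct.

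However, your final inference, as stated, is a non sequitur, and it is the one step that genuinely needs care. You write that because $\tp{\cdot}$ is \emph{increasing} on the relevant branch, applying $\tpi{\cdot}$ to the displayed constraint yields $\chi_p\le\tpi{\frac{N-1}{N}\log p}$. But inverting an increasing function preserves the direction of an inequality: on the increasing branch $[0,1/p]$, the constraint $\tp{\chi}\ge v$ would give $\chi\ge\tpi{v}$, a \emph{lower} bound, the opposite of what the theorem asserts. Worse, with the increasing-branch inverse the theorem itself would be false: the completely noisy box has $\mu_j=1/p$ for all $j$, hence $\chi_p=1/p$, which strictly exceeds the increasing-branch value of $\tpi{\frac{N-1}{N}\log p}$ (for $p=2$, $N=2$ that value is $h^{-1}(1/2)\approx 0.11<1/2$), yet purely local noise certainly does not violate information causality. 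The inequality is only meaningful, and only follows, if $\tpi{\cdot}$ is read as the inverse of $\tp{\cdot}$ on its \emph{decreasing} branch $[1/p,1]$ (the paper's footnote, which points to ``the first part,'' is itself misleading here, so you inherited the confusion rather than created it): since $\tp{\cdot}$ attains its maximum $\log p$ at $x=1/p$ and is strictly decreasing on $[1/p,1]$, any $\chi>\tpi{v}$ satisfies $\tp{\chi}<v$, so the constraint $\tp{\chi}\ge v$ forces $\chi\le\tpi{v}$ by contraposition. With that single correction --- invert on the decreasing branch and argue by contraposition rather than by monotone preservation --- your proof coincides with the paper's.
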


\subsubsection*{Case 2: $N \ge p$} 

Assume $N=p^n$ and $y=\ssll{i=0}{n-1}y_ip^i$ for $y_i \in \zp$. We present a recursive protocol that calls the basicRAC protocol as a subroutine. It is the $p$-dimensional generalization of the one given in Ref.\cite{PPK+09}.

\begin{protocol}[htb]
\caption{~~~recursiveRAC$(p,N,c,P,\vx,y)$}
\begin{tabular}{p{0.06 \textwidth}p{0.4 \textwidth}}
\textbf{Setup:} & $p$ is prime. $N=p^n$ for some $n \ge 1$. Alice has $\vx \in \zpN$ and Bob has $y=\ssll{i=0}{n-1}y_ip^i$ where $y_i\in \zp$.
They share at least $N-1$ copies of a $p$-nary-input, $p$-nary-output box $P$. They also choose $c \in \zp$.\\
\textbf{Steps:} & If $N=p$, then Alice and Bob execute basicRAC$(p,N,c,P,\vx,y)$. Otherwise:
\begin{itemize}
\item \textbf{Alice}: she divides her input $\vx$ into $\frac{N}{p}$ substrings:
$\vz_0\equiv(x_0,x_1,\dots,x_{p-1})$,
$\vz_1\equiv(x_p,x_{p+1},\dots,x_{2p-1})$,
$\dots$, $\vz_{\frac{N}{p}-1} \equiv(x_{N-p},x_{N-p+1},\dots,x_{N-1})$.
For $k=0,1,\dots,\frac{N}{p}-1$, she executes her part of basicRAC$(p,p,c,P,\vz_k,y_0)$, except that she does not send her message (which is denoted by $q_k$). Then she executes her part of recursiveRAC$(p,\frac{N}{p},0,P,\vec{q},y')$,
where $\vec{q}=(q_0,q_1,\dots,q_{\frac{N}{p}-1})$ and $y'=\ssll{i=1}{n-1}y_ip^{i-1}$.
\item \textbf{Bob}: he executes his part of recursiveRAC$(p,\dfrac{N}{p},0,P,\vec{q},y')$. (Here still $y'=\ssll{i=1}{n-1}y_ip^{i-1}$.) Suppose the output from this protocol is $\hat{q}$. Then he executes his part of basicRAC$(p,p,c,\vz_{y'},y_0)$, except that he uses $\hat{q}$ as the message received from Alice.  The output from this protocol is set as his final output.
\end{itemize}
\end{tabular}
\end{protocol}

Fig. \ref{fig:recursive} illustrates an example of this protocol for $p=3$ and $N=9$.

\begin{figure}[htbp]
\center
\includegraphics[width=0.9 \columnwidth, height=1.2 \columnwidth]{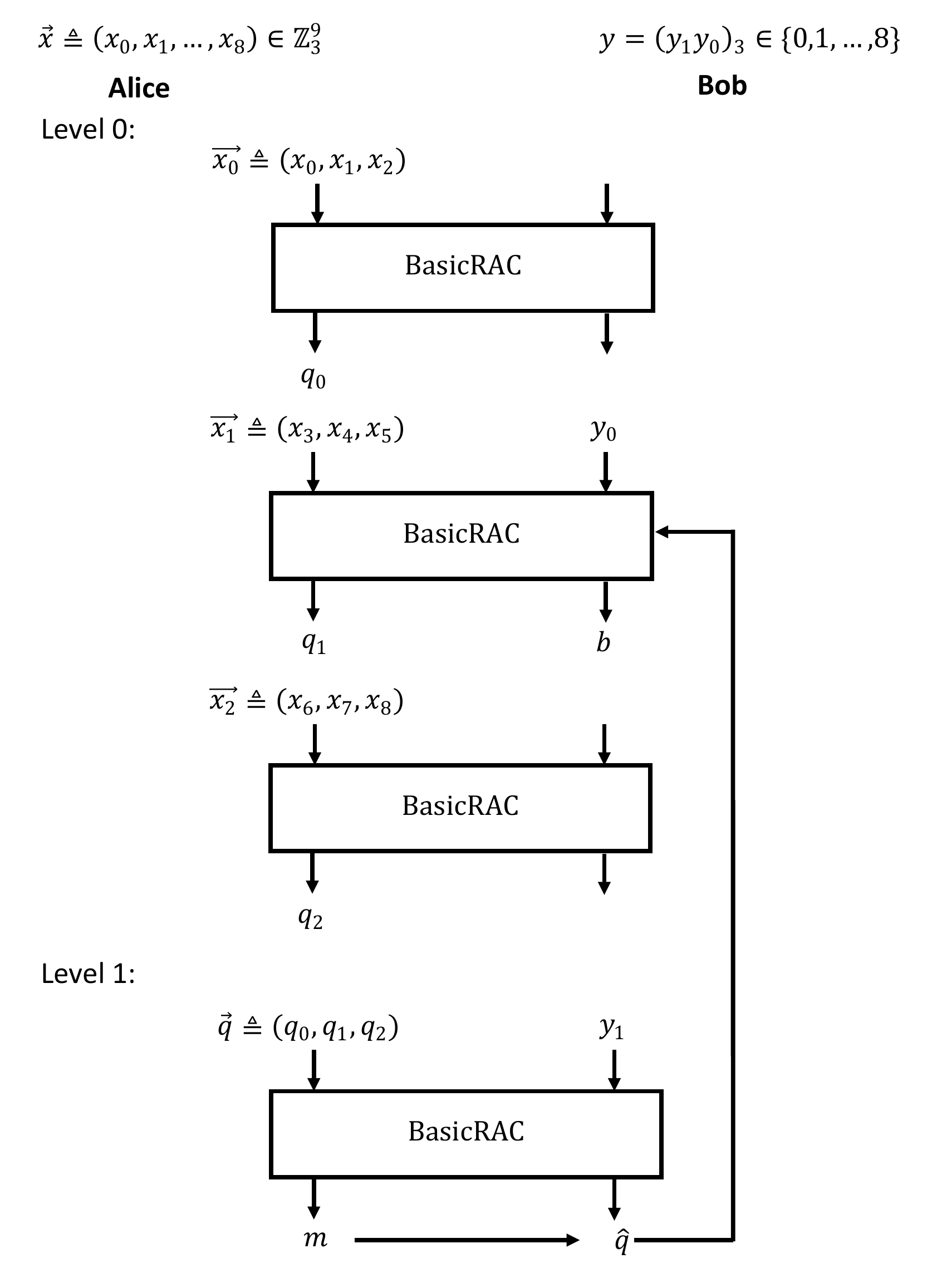}\\
\caption{Recursive protocol for $N \ge p$. Here we show an example for $p=3$ and $N=9$.
Suppose Alice receives $\vx=(x_0,x_1,\dots,x_8)\in \mathbb{Z}^9_3$
and Bob receives $y=(y_1y_0)_3 \in \{0,1,\dots,8\}$. They also share at least $N-1=8$
copies of some box $P$. The protocol consists of three level-$0$ and one level-$1$
executions of the basic protocol illustrated in Fig.\ref{fig:basic}. For Alice, she executes
the three level-$0$ basic protocols with inputs $(x_0,x_1,x_2)$, $(x_3,x_4,x_5)$ and $(x_6,x_7,x_8)$
respectively. Suppose her messages (which are not really sent) are $q_0,q_1,q_2$ respectively.
Then she executes the level-$1$ basic protocol with input $(q_0,q_1,q_2)$,
and truly sends her message for this one. For Bob, he executes the level-$1$ basic protocol
with input $y_1$, and suppose his output is $\widehat{q}$. Then he executes only the $y_1$-th level-$0$ basic protocol with input $y_0$(this figure shows the case $y_1=1$). During this protocol he
pretends that $\widehat{q}$ is the message received from Alice. At last, this basic protocol's output
$b$ is set as Bob's final output. Note that each basic protocol costs $p-1=2$ copies of box $P$, so
Alice totally accesses $8$ copies of $P$, while Bob only accesses $4$ copies of $P$.}
\label{fig:recursive}
\end{figure}

The recursiveRAC protocol can be viewed as a level-$n$ pyramid of basicRAC protocols. Its idea is that Alice uses level $k+1$ to transmit her messages generated at level $k$, while Bob uses level $k+1$ to reveal the message he needs
at level $k$, $\forall k=0,1,\dots,n-1$. Although Alice accesses totally $(p-1)\ssll{k=0}{n-1}p^k=N-1$ copies of $\widetilde{P}$, Bob only accesses
$n(p-1)$ copies of $\widetilde{P}$ and only these boxes are truly relevant to his final output. Each of these boxes contributes to his final output an additive shift that equals $j$ with probability $\mu_j$,
$\forall j \in \zp$. So Bob's final guess is correct if and only if all these additive shifts sum to $-c$, which happens with probability
\beba
\chi_p(\vec{\mu},n(p-1),c) &=& \ssll{j_1+\dots+j_{n(p-1)}=-c}{}\ppll{k=1}{n(p-1)}{\mu_{j_k}}\\
&=&\dfrac{1}{p}\ssll{k=0}{p-1}\omega^{ck}(\ssll{j=0}{p-1}\mu_j\omega^{jk})^{n(p-1)}.
\eeea
So, by Eq.(\ref{eq:fano}), we must have
\beba
\chi_p(\vec{\mu},n(p-1),c) \le \tpi{\dfrac{N-1}{N}\log p},
\eeea
otherwise condition (\ref{eq:ic}) is violated. Thus,

\begin{theorem}
In any world where information causality holds,
any $p$-nary-input, $p$-nary-output box $P$ satisfies:
$\forall n\ge 1$, $N=p^n$, $\forall c \in \zp$,
\beba
\dfrac{1}{p}\ssll{k=0}{p-1}\omega_p^{ck}(\ssll{j=0}{p-1}\mu_j\omega_p^{jk})^{n(p-1)} \le \tpi{\dfrac{N-1}{N}\log p},
\eeea
where $\mu_j$ is defined as Eq.(\ref{eq:muj}).
\label{thm:icpr2}
\end{theorem}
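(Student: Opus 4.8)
The plan is to mirror the Case~1 analysis: first establish correctness of \textrm{recursiveRAC} in the noiseless case $P=PR_p$, then track how the additive noise of $\widetilde{P}$ accumulates along Bob's decoding path. I would begin by showing, by induction on $n$, that when $P=PR_p$ (so $\widetilde{P}=PR_p$ and step~1 is vacuous) Bob recovers $x_y$ with certainty. The base case $n=1$ is just the correctness of \textrm{basicRAC} for $PR_p$, already verified. For the inductive step the point is to check the pyramid bookkeeping: Alice's $\frac{N}{p}$ unsent level-$0$ messages $q_0,\dots,q_{\frac{N}{p}-1}$ are fed as the length-$p^{n-1}$ input string $\vec{q}$ to a \textrm{recursiveRAC} indexed by $y'=\ssll{i=1}{n-1}y_ip^{i-1}$, which by the induction hypothesis delivers to Bob exactly $\hat q=q_{y'}$; Bob then runs the $y'$-th level-$0$ \textrm{basicRAC} with $\hat q$ as the relayed message and input $y_0$, obtaining the $y_0$-th digit of $\vz_{y'}$, namely $x_{py'+y_0}=x_y$ since $y=y_0+py'$. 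This confirms that the base-$p$ expansion of $y$ is exactly what the pyramid consumes.

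I would then account for the noise. After step~1 every box is $\widetilde{P}=\ssll{j=0}{p-1}\mu_j PR_{p,j}$, so each individual use of $\widetilde{P}$ acts as an ideal $PR_p$ up to an independent additive shift equal to $j$ with probability $\mu_j$. The crucial claim --- and the real content of the theorem --- is a counting statement: although Alice touches all $N-1=(p-1)\ssll{k=0}{n-1}p^k$ boxes, only the boxes lying on Bob's single decoding path affect his final output, and this path meets exactly one \textrm{basicRAC} subroutine per level, each consuming $p-1$ copies of $\widetilde{P}$, for a total of $n(p-1)$ relevant boxes. I would establish this by an induction parallel to the correctness argument, noting that at every level Bob executes only the $y'$-th subroutine and that all other subroutines feed only messages he never reads. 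Because every step of the protocol is linear over $\zp$, the shifts on the path compose additively, so Bob's guess is correct precisely when these $n(p-1)$ independent, identically $\vec{\mu}$-distributed shifts sum to $-c$ modulo $p$ --- an event of probability $\chi_p(\vec{\mu},n(p-1),c)$, manifestly independent of $y$.

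The rest is a transcription of Case~1. I would invoke the Fourier/trace identity already derived for $\chi_p(\vec{\mu},M,c)$, specialized to $M=n(p-1)$, to obtain the closed form $\frac{1}{p}\ssll{k=0}{p-1}\omega_p^{ck}\left(\ssll{j=0}{p-1}\mu_j\omega_p^{jk}\right)^{n(p-1)}$. Since $P(b=x_i\mid y=i)=\chi_p(\vec{\mu},n(p-1),c)$ for every $i$, Fano's inequality~(\ref{eq:fano}) together with the information-causality bound~(\ref{eq:ic}) at $m=1$ forces $\tp{\chi_p(\vec{\mu},n(p-1),c)}\ge\frac{N-1}{N}\log p$, and inverting $\tp{\cdot}$ exactly as in Theorem~\ref{thm:icpr} yields the claimed inequality.

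I expect the main obstacle to lie in the recursive bookkeeping of the second paragraph: one must argue rigorously that Bob's final output depends on precisely one \textrm{basicRAC} per level (hence on exactly $n(p-1)$ boxes), and that the associated shifts are mutually independent and each $\vec{\mu}$-distributed, so that their mod-$p$ convolution is governed by $\chi_p(\vec{\mu},n(p-1),c)$. Once this counting and independence are pinned down, the Fourier evaluation and the Fano/information-causality estimate follow immediately from the Case~1 machinery.
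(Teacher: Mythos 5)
Your proposal is correct and follows essentially the same route as the paper: the paper likewise argues that although Alice accesses all $N-1$ copies of $\widetilde{P}$, only the $n(p-1)$ boxes on Bob's decoding path (one basicRAC per level of the pyramid) contribute additive shifts to his output, so his guess succeeds with probability $\chi_p(\vec{\mu},n(p-1),c)$, after which the Fourier/trace identity and the Fano/information-causality estimate from Case~1 give the bound. Your added inductive verification of the noiseless bookkeeping ($\hat q=q_{y'}$ and $x_{py'+y_0}=x_y$) and of the independence of the path shifts simply makes explicit what the paper asserts in prose.
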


\subsubsection{Bounds with respect to general functional boxes}

Theorems \ref{thm:icpr} and \ref{thm:icpr2} can be viewed as giving bounds on the proximity between a plausible $p$-nary-input, $p$-nary-output box and $PR_p$. By lemma \ref{lem:pf}, $P^f$ and $PR_p$ are interconvertible for any additively inseparable function $f$. So there should be also bounds on the proximity between a plausible $p$-nary-input, $p$-nary-output box and the corresponding functional box. In what follows, we will give several such bounds.

\begin{definition}
For any function $f:\zp \times \zp \to \zp$, define
\be
P^f_j(a,b|x,y)=
\bc
\dfrac{1}{p},~~~~\textrm{if}~~a-b=f(x,y)-j\\
0,~~~~\textrm{otherwise}
\ec,
\ee
where $a,b,x,y,j \in \zp$.
\end{definition}
$P^f_j$ and $P^f$ are essentially equivalent, except for an additive shift of the outputs.

We will consider the case of $\Delta(f)=2$ and $\Delta(f)>2$ separately.\\

\textbf{Case 1}: $\Delta(f)=2$.

Suppose
\beba
f(x,y)=\lambda xy+g(x)+h(y)
\eeea
for some $\lambda \neq 0$ and $g,h:\zp \to \zp$.

Given any $p$-nary-input, $p$-nary-output box $P$, define
\beba
\nu_j=\dfrac{1}{p^2}\ssll{x,y=0}{p-1}P(a-b=f(x,y)-j|x,y)
\label{eq:nuj}
\eeea
where
\beba
&&P(a-b=f(x,y)-j|x,y)\\
&=&\ssll{k=0}{p-1}P(a=k,b=k-f(x,y)+j|x,y).
\eeea
Then we have
\beba
P \to \ssll{j=0}{p-1}\nu_j P^f_j.
\label{eq:ppfj}
\eeea
To prove this, consider the following protocol:
Alice and Bob generate three independent and uniformly random variable $\alpha$, $\beta$, $\gamma \in \zp$.
Alice inputs 
\beba
x'=x+\alpha
\eeea
to $P$, and Bob inputs 
\beba
y'=y+\beta
\eeea 
to $P$. Suppose they receive outputs $a'$ and $b'$ respectively. Then Alice sets
\beba
a=a'-\lambda \beta x -\lambda\alpha\beta -g(x+\alpha)+g(x)
+\gamma
\eeea 
as her final output,
and Bob sets 
\beba
b=b'+\lambda \alpha y+h(y+\beta)-h(y)+\gamma
\eeea
as his final output. Suppose this protocol realizes a box $\widehat{P}$.
Then for any given $a,b,x,y$,
\beba
\widehat{P}(a,b|x,y)=\dfrac{1}{p^3}\ssll{\alpha,\beta,\gamma=0}{p-1}P(a',b'|x',y').
\eeea
Note that
\beba
a'-b'-f(x',y')&=&(a+\lambda \beta x +\lambda\alpha\beta +g(x+\alpha)-g(x)\\
&&-\gamma)-(b-\lambda \alpha y-h(y+\beta)+h(y)\\
&&-\gamma)-(\lambda (x+\alpha)(y+\beta)+g(x+\alpha)\\
&&+h(y+\beta))\\
&=&a-b-f(x,y).
\eeea
Besides, $(x',y')$ are uniformly random in $\zp \times \zp$; $a'$ (or $b'$) is also uniformly random in $\zp$ conditioned on any $(x',y')$. So, if $a-b=f(x,y)-j$, then
\beba
\widehat{P}(a,b|x,y)&=&\dfrac{1}{p^3}\ssll{x',y'=0}{p-1}P(a'-b'=f(x',y')-j|x',y')\\
&=&\dfrac{\nu_j}{p},
\eeea
which implies
\beba
\widehat{P}=\ssll{j=0}{p-1}\nu_jP^f_j.
\eeea

Now recall that in the proof of lemma {\ref{lem:pf}} we gave the following protocol that converts a $P^f$ to a $PR_p$: Alice inputs $x$ and Bob inputs $y$, and suppose they receive $a$ and $b$. Their final outputs are $a'=\lambda^{-1}(a-g(x))$ and $b'=\lambda^{-1}(b+h(y))$. Note
\beba
a'-b'-xy=\lambda^{-1}(a-b-f(x,y)).
\eeea
So this protocol converts a $P^f_j$ to a $PR_{p,\lambda^{-1}j}$. Hence, we have
\beba
\ssll{j=0}{p-1}\nu_jP^f_j \to \ssll{j=0}{p-1}\nu_jPR_{p,\lambda^{-1}j}.
\label{eq:pfjprpj}
\eeea

Combining Eq.(\ref{eq:ppfj}) and (\ref{eq:pfjprpj}),
we obtain
\beba
P \to \ssll{j=0}{p-1}\nu_j PR_{p,\lambda^{-1}j}.
\eeea
Then by theorems \ref{thm:icpr} and
\ref{thm:icpr2}, we get

\begin{theorem}
In any world where information causality holds,
for any additively inseparable function $f:\zp \times \zp \to \zp$ with $\Delta(f)=2$ and any $p$-nary-input, $p$-nary-output box $P$, we have:
\begin{itemize}
\item $\forall N \le p$, $\forall c \in \zp$,
\beba
\dfrac{1}{p}\ssll{k=0}{p-1}\omega_p^{ck}(\ssll{j=0}{p-1}\nu_j\omega_p^{jk})^{N-1} \le \tpi{\dfrac{N-1}{N}\log p};
\eeea
\item $\forall n \ge 1$, $N=p^n$, $\forall
c \in \zp$,
\beba
\dfrac{1}{p}\ssll{k=0}{p-1}\omega_p^{ck}(\ssll{j=0}{p-1}\nu_j\omega_p^{jk})^{n(p-1)} \le \tpi{\dfrac{N-1}{N}\log p},
\eeea
where $\nu_j$ is defined as Eq.(\ref{eq:nuj}).
\end{itemize}
\label{thm:icpf1}
\end{theorem}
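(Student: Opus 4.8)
The plan is to derive Theorem~\ref{thm:icpf1} as a direct corollary of Theorems~\ref{thm:icpr} and~\ref{thm:icpr2}, using the exact simulation chain established just above, $P \to \ssll{j=0}{p-1}\nu_j PR_{p,\lambda^{-1}j}$, which is obtained by composing the $f$-adapted depolarization (\ref{eq:ppfj}) with the conversion (\ref{eq:pfjprpj}). Write $Q \equiv \ssll{j=0}{p-1}\nu_j PR_{p,\lambda^{-1}j}$. Since $Q$ is itself a no-signalling $p$-nary-input, $p$-nary-output box that Alice and Bob can realize from copies of $P$, Theorems~\ref{thm:icpr} and~\ref{thm:icpr2} apply verbatim to $Q$; the only work is to express the parameters they attach to $Q$ through Eq.~(\ref{eq:muj}) in terms of the $\nu_j$ defined from $P$.

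First I would compute $\mu_\ell(Q)$. Because $PR_{p,m}(a-b=xy-\ell\,|\,x,y)=\delta_{\ell,m}$ independently of $x,y$, the mixture gives $Q(a-b=xy-\ell\,|\,x,y)=\nu_{\lambda\ell}$, and hence, by Eq.~(\ref{eq:muj}), $\mu_\ell(Q)=\nu_{\lambda\ell}$ for every $\ell\in\zp$. Substituting this identity into the conclusions of Theorems~\ref{thm:icpr} and~\ref{thm:icpr2} yields, for all admissible $N$ and all $c\in\zp$, bounds of the form $\dfrac{1}{p}\ssll{k=0}{p-1}\omega_p^{ck}\left(\ssll{\ell=0}{p-1}\nu_{\lambda\ell}\,\omega_p^{\ell k}\right)^{M}\le\tpi{\frac{N-1}{N}\log p}$, with $M=N-1$ in the first case and $M=n(p-1)$ in the second.

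The last step is a relabelling that restores the stated form. In the inner sum I set $j=\lambda\ell$ (a bijection of $\zp$ since $\lambda\neq0$), giving $\ssll{j=0}{p-1}\nu_j\,\omega_p^{\lambda^{-1}jk}$; then replacing the outer index $k$ by $\lambda k$ turns the expression into $\dfrac{1}{p}\ssll{k=0}{p-1}\omega_p^{(\lambda c)k}\left(\ssll{j=0}{p-1}\nu_j\,\omega_p^{jk}\right)^{M}$. Thus the bound obtained is exactly the claimed inequality but evaluated at $\lambda c$ in place of $c$. Because $\lambda\neq0$, the map $c\mapsto\lambda c$ permutes $\zp$, so the universal quantifier ``$\forall c\in\zp$'' absorbs the factor $\lambda$ and the bound holds for every $c$, which is precisely Theorem~\ref{thm:icpf1}. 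The only genuinely delicate point is this $\lambda$-bookkeeping: one must track how the multiplicative constant $\lambda$ from $f(x,y)=\lambda xy+g(x)+h(y)$ propagates through the change of variables and recognize that it is harmless once we quantify over all $c$; everything else is a routine substitution.
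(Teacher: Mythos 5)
Your proof is correct and takes essentially the same route as the paper: the paper likewise combines Eqs.~(\ref{eq:ppfj}) and (\ref{eq:pfjprpj}) into the simulation $P \to \sum_{j=0}^{p-1}\nu_j PR_{p,\lambda^{-1}j}$ and then simply invokes Theorems~\ref{thm:icpr} and~\ref{thm:icpr2}. Your explicit $\lambda$-bookkeeping --- computing $\mu_\ell = \nu_{\lambda\ell}$ for the mixture box and absorbing the resulting shift $c \mapsto \lambda c$ into the universal quantifier over $c \in \zp$ --- correctly fills in the relabelling step that the paper leaves implicit.
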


\textbf{Case 2}: $\Delta(f)>2$.

Given arbitrary $p$-nary-input, $p$-nary-output box $P$, we change it slightly as follows: Alice and Bob generate a uniformly random $\gamma \in \zp$, and modify their outputs by $a \to a+\gamma$ and $b \to b+\gamma$. Suppose this modified box is $P'$. They we have
\beba
P'(a,b|x,y)=\dfrac{1}{p}E_{x,y,f(x,y)-a+b}
\eeea
where
\beba
E_{x,y,j} & = & P(a-b=f(x,y)-j|x,y)\\
&=& \ssll{k=0}{p-1}P(a=k,b=k-f(x,y)+j|x,y).
\label{eq:exyj}
\eeea
Define
\beba
\nu_j=\min\limits_{x,y \in \zp}{E_{x,y,j}}.
\label{eq:nuj2}
\eeea
Then we have
\beba
P'=\ssll{j=0}{p-1} \nu_j P^f_j + (1-\ssll{j=0}{p-1} \nu_j)P''
\eeea
for some box $P''$.

Now recall that in the proof of lemma \ref{lem:pf},
we have given a recursive protocol that transforms $M \equiv 2^{\Delta(f)-2}$ copies of $P^f$ into a copy of $PR_p$. Let us see what happens if we apply that protocol to $P'$. Suppose the resulting box is $\widetilde{P}$.
Note that $P'$ acts as $P^f_j$ with probability $\nu_j$ (or acts as $P''$ with probability $1-\ssll{j=0}{p-1}\nu_j$). So each time Alice and Bob access a $P'$, their outputs are basically the same as those of $P^f$ except for an additive shift which is $j$ with probability $\nu_j$ (or with probability $1-\ssll{j=0}{p-1}\nu_j$ the outputs are nonsense). These additive shifts are multiplied by a factor $\lambda^{-1}$ for some $\lambda\neq 0$ (see Eq.(\ref{eq:pfjprpj})). And the overall shift in the final output is the sum of a half of these multiplied shifts minus the sum of the other half, since we use a difference method. So,
\beba
\widetilde{P} = \ssll{j=0}{p-1}{\mu_j} PR_{p,j} +(1-\ssll{j=0}{p-1}\mu_j)P''',
\eeea
where
\beba
\mu_j &=& \ssll{(j_1,\dots,j_M) \in \mathcal{S}_{M,j}} {}
\ppll{k=1}{M}\nu_{j_k}
\label{eq:mujnuj}
\eeea
in which
\beba
\mathcal{S}_{M,j}\equiv\{(j_1,\dots,j_M)\in \mathbb{Z}^M_p:
\ssll{k=1}{\frac{M}{2}}j_k - \ssll{k=\frac{M}{2}+1}{M}j_k=\lambda j\}
\label{eq:smj}
\eeea
and $P'''$ is some box.

Then, by lemma \ref{lem:dep}, we have
\beba
\widetilde{P} \to \widehat{P}=\ssll{j=0}{p-1}\widehat{\mu_j}PR_{p,j}
\eeea
for some $\widehat{\mu_j}\ge \mu_j$.

Now assume Alice and Bob execute basicRAC$(p,N,c,\widehat{P},\vx,y)$ for $N \le p$. Then Bob's guess is correct with probability
\beba
\chi_p(\vec{\widehat{\mu}},N-1,c)&=& \ssll{j_1+\dots+j_{N-1}=-c}{}\ppll{k=1}{N-1}{\widehat{\mu_{j_k}}}\\
&\ge &
\ssll{j_1+\dots+j_{N-1}=-c}{}\ppll{k=1}{N-1}{\mu_{j_k}}\\
&=&
\ssll{(j_1,\dots,j_{M(N-1)}) \in \mathcal{S}_{M(N-1),-c}}{}\ppll{k=1}{M(N-1)}{\nu_{j_k}}\\
& \equiv & \sigma_p (\vec{\nu}, M(N-1), c),
\eeea
where in the second step we use $\widehat{\mu_j} \ge \mu_j \ge 0$, and in the third step
we use Eqs.(\ref{eq:mujnuj}) and (\ref{eq:smj}), and in the last step
$\vec\nu=(\nu_0,\nu_1,\dots,\nu_{p-1})$.
So we must have
\beba
\sigma_p (\vec{\nu}, M(N-1), c) \le \tpi{\dfrac{N-1}{N}\log p},
\eeea
otherwise condition (\ref{eq:ic}) is violated.

Similarly, if Alice and Bob execute recursiveRAC$(p,N,c,\widehat{P},\vx,y)$ for $N=p^n$, then Bob's guess is correct with probability
\beba
\chi_p(\vec{\widehat{\mu}},n(p-1),c)
\ge \sigma_p (\vec{\nu}, Mn(p-1), c),
\eeea
So unless
\beba
\sigma_p (\vec{\nu}, Mn(p-1), c) \le \tpi{\dfrac{N-1}{N}\log p},
\eeea
condition (\ref{eq:ic}) is violated.

Note that $\forall L \ge 1$, $\forall c \in \zp$,
\beba
\sigma_p (\vec{\nu}, L, c)
&=&
\ssll{(j_1,\dots,j_L) \in \mathcal{S}_{L,-c}}{}\ppll{k=1}{L}{\nu_{j_k}}\\
&=& \dfrac{1}{p}\mathrm{tr}(Z_p^{\lambda c} (\ssll{j=0}{p-1}{\nu_j Z^j_p})^{\frac{L}{2}}
(\ssll{j=0}{p-1}{\nu_j Z^{-j}_p})^{\frac{L}{2}})\\
&=& \dfrac{1}{p} \ssll{k=0}{p-1}
\omega_p^{\lambda c k}
(\ssll{j=0}{p-1}{\nu_j \omega_p^{jk}})^{\frac{L}{2}}
(\ssll{j=0}{p-1}{\nu_j \omega_p^{-jk}})^{\frac{L}{2}},
\eeea
where in the second step we use $\textrm{tr}(Z^i_p)=0$, $\forall i=1,2,\dots,p-1$. So we have

\begin{theorem}
In any world where information causality holds,
for any additively inseparable function $f:\zp \times \zp \to \zp$ with $\Delta(f)>2$ and any
$p$-nary-input, $p$-nary-output box $P$, we have:
\begin{itemize}
\item $\forall N \le p$, $L=(N-1)2^{\Delta(f)-2}$,
$\forall c \in \zp$,
\beba
\dfrac{1}{p} \ssll{k=0}{p-1}
\omega_p^{ c k}
(\ssll{j=0}{p-1}{\nu_j \omega_p^{jk}})^{\frac{L}{2}}
(\ssll{j=0}{p-1}{\nu_j \omega_p^{-jk}})^{\frac{L}{2}} \le \tpi{\dfrac{N-1}{N}\log p};
\eeea
\item $\forall n \ge 1$, $N=p^n$, $L=n(p-1)2^{\Delta(f)-2}$,
$\forall c \in \zp$,
\beba
\dfrac{1}{p} \ssll{k=0}{p-1}
\omega_p^{ c k}
(\ssll{j=0}{p-1}{\nu_j \omega_p^{jk}})^{\frac{L}{2}}
(\ssll{j=0}{p-1}{\nu_j \omega_p^{-jk}})^{\frac{L}{2}} \le \tpi{\dfrac{N-1}{N}\log p},
\eeea
where $\nu_j$ is defined as Eqs.(\ref{eq:exyj}) and (\ref{eq:nuj2}).
\end{itemize}
\label{thm:icpf2}
\end{theorem}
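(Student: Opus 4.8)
The plan is to reduce this case to Theorems \ref{thm:icpr} and \ref{thm:icpr2}, which already bound how useful a noisy $PR_p$ box can be in the basicRAC and recursiveRAC protocols. The obstruction compared to Theorem \ref{thm:icpf1} is that, since $\Delta(f)>2$, a single copy of $P^f$ no longer converts to $PR_p$; by Lemma \ref{lem:pf} one must instead run the recursive difference construction that consumes $M\equiv 2^{\Delta(f)-2}$ copies. So I would feed a noisy surrogate of $P^f$ (distilled from the given $P$) into that construction, carefully propagate the additive noise it accumulates, and then invoke information causality.

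First I would symmetrize $P$ by adding a common uniformly random $\gamma\in\zp$ to both outputs, producing a box $P'$ whose statistics depend on $a,b$ only through $a-b$ relative to $f(x,y)$, as encoded by the quantities $E_{x,y,j}$ of Eq.(\ref{eq:exyj}). The decisive step is to peel off an \emph{input-independent} good part: taking $\nu_j=\min_{x,y}E_{x,y,j}$ as in Eq.(\ref{eq:nuj2}) guarantees that, no matter which inputs the later protocol feeds to the boxes, each access behaves as the clean $P^f_j$ with probability at least $\nu_j$. This yields the convex decomposition $P'=\ssll{j=0}{p-1}\nu_j P^f_j+(1-\ssll{j=0}{p-1}\nu_j)P''$ for some legitimate box $P''$, which is exactly what lets us ignore the inputs chosen inside the protocol.

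Next I would run the recursive difference protocol of Lemma \ref{lem:pf} on $P'$ in place of $P^f$. The step I expect to be the real obstacle is tracking how the $M$ independent leaf shifts combine. Each leaf contributes a shift equal to $j$ with probability $\nu_j$; at each difference node two child shifts combine by subtraction, so after the balanced tree of depth $\Delta(f)-2$ the net shift is a signed sum with exactly $M/2$ plus signs and $M/2$ minus signs, and the final conversion to $PR_p$ rescales it by a fixed $\lambda\neq 0$ (cf. Eq.(\ref{eq:pfjprpj})). Because the leaf shifts are i.i.d.\ and the weight $\ppll{k=1}{M}\nu_{j_k}$ is symmetric under permuting indices, the particular sign pattern is irrelevant and the collapse probabilities reduce to sums over the canonical constraint set $\mathcal{S}_{M,j}$ of Eq.(\ref{eq:smj}); this produces $\widetilde{P}=\ssll{j=0}{p-1}\mu_j PR_{p,j}+(1-\ssll{j=0}{p-1}\mu_j)P'''$ with $\mu_j$ as in Eq.(\ref{eq:mujnuj}). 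Applying the depolarization of Lemma \ref{lem:dep} to $\widetilde{P}$ only redistributes weight onto the good components, giving $\widehat{P}=\ssll{j=0}{p-1}\widehat{\mu_j}PR_{p,j}$ with $\widehat{\mu_j}\ge\mu_j$.

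Finally I would run basicRAC (for $N\le p$) or recursiveRAC (for $N=p^n$) using $\widehat{P}$, so that Bob succeeds exactly when the relevant additive shifts sum to $-c$, with probability $\chi_p(\vec{\widehat{\mu}},\cdot,c)$. Lower-bounding each $\widehat{\mu_{j_k}}\ge\mu_{j_k}\ge 0$ and re-expanding $\mu$ via Eq.(\ref{eq:mujnuj}) bounds this below by $\sigma_p(\vec{\nu},L,c)$ with $L=M(N-1)$ or $L=Mn(p-1)$ respectively; information causality (\ref{eq:ic}) together with Fano's inequality (\ref{eq:fano}) then forces $\sigma_p(\vec{\nu},L,c)\le\tpi{\dfrac{N-1}{N}\log p}$. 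The remaining work is the routine character-sum evaluation: writing the indicator of the linear constraint defining $\mathcal{S}_{L,-c}$ as an average over the characters of $\zp$, one obtains
\be
\sigma_p(\vec{\nu},L,c)=\frac{1}{p}\,\mathrm{tr}\!\left(Z_p^{\lambda c}\Big(\ssll{j=0}{p-1}\nu_j Z_p^{j}\Big)^{\frac{L}{2}}\Big(\ssll{j=0}{p-1}\nu_j Z_p^{-j}\Big)^{\frac{L}{2}}\right),
\ee
and diagonalizing $Z_p$ with $\omega_p=e^{i2\pi/p}$ gives the explicit sum. Since $c$ ranges over all of $\zp$ and $\lambda\neq 0$, the factor $\omega_p^{\lambda c k}$ may be relabeled to $\omega_p^{ck}$, yielding precisely the two stated bounds.
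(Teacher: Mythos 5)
Your proposal is correct and takes essentially the same route as the paper's own proof: the same symmetrization of $P$ into $P'$, the same input-independent decomposition via $\nu_j=\min_{x,y}E_{x,y,j}$, the same use of the recursive difference construction of Lemma \ref{lem:pf} to obtain $\widetilde{P}$ with weights $\mu_j$ as in Eq.~(\ref{eq:mujnuj}), depolarization via Lemma \ref{lem:dep} to get $\widehat{\mu_j}\ge\mu_j$, the basicRAC/recursiveRAC protocols combined with information causality, and the concluding character-sum evaluation of $\sigma_p(\vec{\nu},L,c)$. Your explicit justifications of the sign-pattern symmetry in the difference tree and of the relabeling $\lambda c\mapsto c$ (which reconciles the derived factor $\omega_p^{\lambda ck}$ with the stated $\omega_p^{ck}$) are points the paper leaves implicit, but they do not constitute a different approach.
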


\begin{remark}
Since quantum correlations satisfy the principle of
information causality, theorems \ref{thm:icpr}, \ref{thm:icpr2}, 
\ref{thm:icpf1} and \ref{thm:icpf2} all apply to them.
\end{remark}

\section{Conclusion}

In summary, we have proposed the class of functional boxes which incorporate the generalized PR boxes as a special case. Every functional box corresponding to an additively inseparable function is asymptotically equivalent to a generalized PR box, which can enable perfect distributed computation and make communication complexity trivial. So all such functional boxes are unlikely to exist. Furthermore, we investigated how proximate can a general box be to these functional boxes without violating the principle of information causality.

Our work raises many new questions:

First, we have shown that if $PR_p$ box can be implemented exactly, it would lead to the collapse of \emph{deterministic} communication complexity. But we do not know how much noise it can tolerate while still making \emph{probabilistic} communication complexity trivial. And what about general $P^f$?

Second, in the proof of lemma \ref{lem:pf}, we gave a protocol that
transforms $2^{\Delta(f)-2}$ copies of $P^f$ into a $PR_p$. That protocol is universal, but might be not optimal for some $f$. If we can simulate a $PR_p$ with fewer copies of $P^f$, then the bounds in theorem \ref{thm:icpf2} can be improved accordingly. In fact,
can we directly use $P^f$ to perform distributed computation, instead of first converting it into $PR_p$?

Third, as pointed out in Ref.\cite{ABL+09}, the set of physically allowed boxes should form a closed set under local wirings. Namely, if a set of boxes $P_1,P_2,\dots,P_m$ are all allowed by a physical theory, then a new box $\widehat{P}$ obtained by locally connecting these boxes should also be allowed by this theory. Conversely, if $\widehat{P}$ is not allowed, then at least one of $P_1,P_2,\dots,P_m$ should be prohibited. Since we have already obtained a set of implausible postquantum correlations, can we use this approach to rule out more?

Finally, here we have only considered bipartite $p$-nary-input, $p$-nary-output boxes. It would be interesting to extend our results to more general boxes with arbitrary number of inputs and outputs, as well as multipartite boxes.

\section*{Acknowledgments}
This research was supported by NSF Grant CCR-0905626
and ARO Grant W911NF-09-1-0440.

\end{document}